\documentclass[conference,letterpaper]{IEEEtran}
\usepackage{amssymb}
\usepackage{amsthm}
\usepackage{placeins}
% some packages

\usepackage[dvipsnames]{xcolor}
\definecolor{DarkGreen}{rgb}{0.1,0.5,0.1}
\definecolor{DarkRed}{rgb}{0.5,0.1,0.1}
\definecolor{DarkBlue}{rgb}{0.1,0.1,0.5}

\usepackage{mleftright}

\usepackage[small]{caption}
\usepackage[pdftex]{hyperref}
\hypersetup{
    unicode=false,          % non-Latin characters in Acrobat¿s bookmarks
    pdftoolbar=true,        % show Acrobat toolbar?
    pdfmenubar=true,        % show Acrobat menu?
    pdffitwindow=false,      % page fit to window when opened
    pdfnewwindow=true,      % links in new window
    colorlinks=true,       % false: boxed links; true: colored links
    linkcolor=DarkBlue,          % color of internal links
    citecolor=DarkGreen,        % color of links to bibliography
    filecolor=DarkGreen,      % color of file links
    urlcolor=DarkBlue,          % color of external links
    %
    % paper specific, please udpate
    %
    pdftitle={},
    pdfauthor={},    
    %pdfsubject={},
    %pdfkeywords={keywords}, % list of keywords
}
\usepackage{amssymb,amsthm,amsfonts,enumitem, dsfont}
\usepackage{comment}
\usepackage{tikz}
\usepackage{pgfplots}
\pgfplotsset{compat=1.14}
\usetikzlibrary{arrows,decorations.pathmorphing,decorations.shapes,snakes,patterns,positioning}
\usepackage{thm-restate}

%% common special letters

\newcommand{\N}{\ensuremath{\mathbb{N}}}

%% probability 

\newcommand{\EE}{\mathbb{E}}

%% norms 

%% inner products

%% (), {}, etc

%% products

%% useful words

\newcommand{\argmax}{\mathrm{argmax}}

%%% misc

%\newcommand{\ind}[1]{\ensuremath{\mathbf{1}_{#1}}}

%% OMG curly greek letters

\renewcommand{\epsilon}{\varepsilon}

%% comments
\definecolor{brightpink}{rgb}{1.0, 0.0, 0.5}
\definecolor{byzantine}{rgb}{0.74, 0.2, 0.64}
\definecolor{byzantium}{rgb}{0.44, 0.16, 0.39}

%\newcommand{\mkw}[1]{{\footnotesize  \textbf{ \textcolor{orange}{ [#1 --mary]}}\normalsize}}
%\newcommand{\vg}[1]{{\footnotesize  \textbf{ \textcolor{cyan}{ [#1 --venkat]}}\normalsize}}

%\newcommand{\jm}[1]{{\footnotesize  \textbf{ \textcolor{Bittersweet}{ [#1 --Jonathan]}}\normalsize}}
%\newcommand{\ryl}[1]{{\color{red} {\small [Ray: #1]}}}
%\newcommand{\mkw}[1]{}
%\newcommand{\TODO}[1]{\textcolor{red}{\textbf{TODO: #1}}}
%\newcommand{\TODO}[1]{}
%\newcommand{\BIGTODO}[1]{\Large\textcolor{red}{\textbf{TODO: #1}}\normalsize}
%\newcommand{\snote}[1]{{\footnotesize [\textbf{\textcolor{blue}{#1}} \textcolor{blue!60!black}{--Shashwat}]\normalsize}}
%\newcommand{\nicnote}[1]{{\footnotesize {\color{violet} [Nic: #1]}}}

%% thm environments
\theoremstyle{plain}
\declaretheorem[name=Theorem,numberwithin=section]{theorem}
\declaretheorem[name=Lemma,sibling=theorem]{lemma}
\newtheorem*{lemma*}{Lemma} 
\newtheorem*{theorem*}{Theorem} 
\newtheorem{definition}[theorem]{Definition}

 % Equal by definition

%widebar
\DeclareFontFamily{U}{mathx}{\hyphenchar\font45}
\DeclareFontShape{U}{mathx}{m}{n}{<-> mathx10}{}
\DeclareSymbolFont{mathx}{U}{mathx}{m}{n}
\DeclareMathAccent{\widebar}{0}{mathx}{"73}

\usepackage{algorithm}

\usepackage[utf8]{inputenc}

%% LaTeX Template for ISIT 2020
%%
%% by Stefan M. Moser, October 2017
%% 
%% derived from bare_conf.tex, V1.4a, 2014/09/17, by Michael Shell
%% for use with IEEEtran.cls version 1.8b or later
%%
%% Support sites for IEEEtran.cls:
%%
%% http://www.michaelshell.org/tex/ieeetran/
%% http://moser-isi.ethz.ch/manuals.html#eqlatex
%% http://www.ctan.org/tex-archive/macros/latex/contrib/IEEEtran/
%%

%% depending on your installation, you may wish to adjust the top margin:
\addtolength{\topmargin}{9mm}

%%%%%%
%% Packages:
%% Some useful packages (and compatibility issues with the IEEE format)
%% are pointed out at the very end of this template source file (they are 
%% taken verbatim out of bare_conf.tex by Michael Shell).
%
% *** Do not adjust lengths that control margins, column widths, etc. ***
% *** Do not use packages that alter fonts (such as pslatex).         ***

\usepackage[T1]{fontenc}
\usepackage{url}
\usepackage{ifthen}
\usepackage{cite}
\usepackage[cmex10]{amsmath} % Use the [cmex10] option to ensure complicance
                             % with IEEE Xplore (see bare_conf.tex)

%% Please note that the amsthm package must not be loaded with
%% IEEEtran.cls because IEEEtran provides its own versions of
%% theorems. Also note that IEEEXplore does not accepts submissions
%% with hyperlinks, i.e., hyperref cannot be used.

\interdisplaylinepenalty=2500 % As explained in bare_conf.tex

%%%%%%
% correct bad hyphenation here
\hyphenation{op-tical net-works semi-conduc-tor}

% ------------------------------------------------------------
\begin{document}
\title{Real-time oblivious erasure correction with linear time decoding and constant feedback} 

% %%% Single author, or several authors with same affiliation:
% \author{%
%   \IEEEauthorblockN{Stefan M.~Moser}
%   \IEEEauthorblockA{ETH Zürich\\
%                     ISI (D-ITET)\\
%                     CH-8092 Zürich, Switzerland\\
%                     Email: moser@isi.ee.ethz.ch}
% }

%%% Several authors with up to three affiliations:
\author{%
  \IEEEauthorblockN{Shashwat Silas}
  \IEEEauthorblockA{Stanford University\\
                    353 Serra Mall,\\
                    CA 94305, USA\\
                    Email: silas@stanford.edu}

}

\maketitle

%%%%%%
%% Abstract: 
%% If your paper is eligible for the student paper award, please add
%% the comment "THIS PAPER IS ELIGIBLE FOR THE STUDENT PAPER
%% AWARD." as a first line in the abstract. 
%% For the final version of the accepted paper, please do not forget
%% to remove this comment!
%%
\begin{abstract}
We continue the study of rateless codes for transmission of information across channels whose rate of erasure is unknown. In such a code, an infinite stream of encoding symbols can be generated  from the message and sent across the erasure channel, and the decoder can decode the message after it has successfully collected a certain number of encoding symbols. A rateless erasure code is real-time oblivious if rather than collecting encoding symbols as they are received, the receiver either  immediately decodes or discards each symbol it receives. 
  
  Efficient real-time oblivious erasure correction uses a feedback channel in order to maximize the probability that a received encoding symbol is decoded rather than discarded. We construct codes which are real-time oblivious, but require fewer feedback messages and have faster decoding compared to previous work in \cite{beimel2007rt}. Specifically, for a message of length $k'$, we improve the expected complexity of the feedback channel from $O(\sqrt{k'})$ to $O(1)$, and the expected decoding complexity from $O(k'\log(k'))$ to $O(k')$. Our method involves using an appropriate block erasure code to first encode the $k'$ message symbols, and then using a truncated version of the real-time oblivious erasure correction of \cite{beimel2007rt} to transmit the encoded message to the receiver, which then uses the decoding algorithm for the outer code to recover the message. 
    \end{abstract}

%% The paper must be self-contained. However, if you are referring to
%% a full version for checking certain proofs, please provide the
%% publically accessible location below.  If the paper is completely
%% self-contained, you can remove the following line from your
%% submission.

\section{Introduction}

Reliable transmission of information on the internet has necessitated the study of new and efficient methods of erasure correction. Information on the internet is transmitted as packets, some of which may be lost or corrupted by the channel during transmission. This process of information loss is similar to the binary erasure channel (BEC) and has made the study practical erasure codes for the BEC important. Common techniques for achieving reliable transmission include using feedback from the receiver to find which packets have gone missing, as in Automatic Repeat Request (ARQ) protocols \cite{tanenbaum1996computer}, or encoding information with erasure codes \cite{luby2001efficient} which can be decoded by the receiver. 

As noted in \cite{beimel2007rt}, both these techniques have their drawbacks. ARQs can be inefficient when there is a time delay in sending messages and acknowledgements back and forth, and as such may not be suitable in settings where there are time constraints on the transmission of information. They also require the extensive use of feedback from the receiver, whereas in many situations communication channels are asymmetric. Erasure codes require additional processing by the receiver (decoding) and an understanding of the rate of information loss over the channel for efficiency. They also do not make use of any feedback from the receiver, whereas at least a lean feedback channel is often present.

Innovations in rateless codes such as \cite{luby2002lt} and \cite{shokrollahi2006raptor} have created efficient methods for transmitting information using erasure codes which require no knowledge of the rate of information loss of the channel. In particular, $k$ message symbols can be transmitted as an infinite stream of encoding symbols, and a decoder which collects \emph{any} (slightly more than) $k$ encoding symbols can successfully decode the original message. In these methods, the decoder needs to wait until they collect a certain number of encoding symbols to begin decoding and the computational effort of decoding is concentrated towards the end of the transmission (sometimes also called the "waterfall" phenomenon). The real-time oblivious erasure correction protocol of \cite{beimel2007rt} removes this restriction. In their protocol encoding symbols are not stored, rather each received encoding symbol is either used immediately or discarded. This \emph{real-time} oblivious decoding reduces the memory overhead at the receiver and also makes the computational effort of decoding uniform over time throughout the transmission process. To accomplish this efficiently, the protocol of \cite{beimel2007rt} utilizes a feedback channel from the receiver to the sender. 

Several other works such as \cite{hagedorn2009rateless}, \cite{talari2013lt}, \cite{hashemi2016fountain}, \cite{hashemi2014near} have also studied the use of feedback channels to construct rateless codes with desirable properties. Different types of feedback messages have been studied, such as providing feedback on the number of symbols which have been decoded by the receiver such as in \cite{hagedorn2009rateless},\cite{beimel2007rt}, and distance type feedback messages in \cite{hashemi2014near}. While these works consider a variety of different properties, they all use a feedback channel and minimize the number of feedback messages needed to accomplish their goals. In this work, we focus  on the real-time oblivious erasure decoding model of \cite{beimel2007rt}, and we show how to significantly reduce the number of feedback messages required in their setting. Our techniques may be of interest for reducing the complexity of other feedback based rateless coding schemes. 

Specifically, we improve the parameters of the real-time oblivious erasure correction protocol of \cite{beimel2007rt}, which to the best of our knowledge are the state of the art real-time oblivious erasure correction codes. Specifically, we show that by using a similar pre-coding scheme to Raptor codes from \cite{shokrollahi2006raptor}, we can reduce the expected complexity of the feedback channel from $O(\sqrt{k'})$ to $O(1)$, and the expected decoding complexity from $O(k'\log(k'))$ to $O(k')$ for a message of length $k'$. The expected number of encoding symbols required for decoding is close to $2k'$ as in the original scheme, and we require a small memory overhead at the sender.

We point out that the codes of \cite{shokrollahi2006raptor} and \cite{beimel2007rt} rely on simple exclusive-or operations in order to encode and decode symbols and are practically efficient. Our codes, which are based on combining components of these works, also only rely on exclusive-or operations. 

\subsection{Real-time oblivious erasure correction}

In real-time oblivious erasure correction, a sender transmits a message of $k$ bits to a receiver. The transmission from the sender occurs in a rateless manner, in which an infinite stream of symbols which are called \emph{encoding symbols} is sent to the receiver over a BEC whose erasure rate is unknown. The receiver has access to the following memory: in $k$ bits it stores whether the $i$th message bit for $0\le i\le k$ has been successfully received, and in another $k$ it stores the values of any successfully received bits. It has access to additional memory in which it can store and process a \emph{single} encoding symbol. Every encoding symbol which is received is processed, and processing an encoding symbol results in either in a message symbol being decoded from the encoding symbol, or the encoding symbol being discarded. If it can be used to decode the $i$th symbol of the message for some $1 \le i \le k$, then the $i$th message symbol is marked as decoded in the memory, the decoded message bit is stored in appropriately, and the memory for processing encoding symbols is cleared. If the received encoding symbol cannot be decoded, then it is discarded from the processing memory immediately. The receiver also has access to a feedback channel through which it can transmit single bits to the sender at any point during the transmission. The efficiency of real-time oblivious erasure correction is measured by the following parameters.
\begin{itemize}
\item Processed symbols: the number of received symbols which are processed by the receiver before the transmission is complete.
\item Feedback complexity: the number of bits which need to be received by the sender through the feedback channel over the entire protocol.
\item Computation/Decoding complexity: the number of single-bit operations which are performed by the receiver in order to finish decoding the message. 
\end{itemize}
Note that these are rateless codes, and are agnostic to the erasure rate of the transmission channel or the feedback channel. All probabilistic statements about the transmission channel are made over the randomness of the encoding map, conditioned on which encoding symbols are received. Similarly, feedback complexity is measured by the number of bits the sender needs to receive, regardless of the feedback channel. 

\subsection{The real-time oblivious erasure correction of \cite{beimel2007rt}}\label{sec:protocol}

We first quickly recall the real-time oblivious erasure correction protocol of \cite{beimel2007rt}. In that work, $k$ symbols are transmitted over an erasure channel in the following way. The sender chooses an appropriate degree $d$, then chooses $d$ symbols of the message, $m_1,\dots,m_d$, uniformly at random without replacement, and transmits the exclusive-or of these messages symbols, $e = \oplus_{1}^{d}m_i$, across the channel as an encoding symbol $e$. The indices of the message symbols that have been included in the encoding symbol are also communicated to the receiver (one may refer to \cite{beimel2007rt} or \cite{luby2002lt} for details of this step). These encoding symbols can be transmitted infinitely in a rateless manner. The receiver \emph{processes} each encoding symbol it receives in the following way. If the receiver already knows exactly $d-1$ of the message symbols included in an encoding symbol, $m_{i_1}, \dots, m_{i_{d-1}}$ then it decodes the encoding symbol by computing $e \oplus m_{i_1} \oplus \dots  \oplus m_{i_{d-1}}$ and learns another message symbol. Otherwise, it discards the encoding symbol immediately. 

As a sanity check, one may note that if $r$ is the number of message symbols already decoded by the receiver, then no encoding symbols of degree $d \ge r+2$ can be decoded. Also note that if $r=0$, then any symbol of degree $d=1$ is decoded with probability $1$. Similarly if $r=k-1$, then the unique symbol of degree $k$ is decoded with probability $1$. In this way, if the value of $r$ is known to the sender, it may optimize $d$ in order to maximize the probability that an encoding symbol is decoded rather than discarded.

In their work, \cite{beimel2007rt} show that the probability of decoding an encoding symbol when $r$ message symbols are known to the decoder is maximized by choosing $d$ as a function of $r$ with $ d(r) = \left\lfloor\frac{k+1}{k-r}\right\rfloor$ for $0\le r \le k-2$ and $d(r)=k$ for $r = k-1$. Every time the value of $d(r)$ changes as the result of the receiver decoding another message symbol, the receiver sends an acknowledgement to the sender to update the encoding degree. 

They are able to show that in expectation: transmitting a message of $k$ symbols requires the receiver to process $2k$ encoding symbols, send $O(\sqrt{k})$ feedback messages, and the expected computational cost for processing all the received symbols is $O(k\log(k))$. 

\subsection{Our protocol}\label{sec:prot}
We modify this protocol in the following way. Similar to Raptor codes, we first encode the message using an appropriate erasure code that has linear time encoding and decoding algorithms. We then use (a truncated version of) the real-time oblivious transmission process of \cite{beimel2007rt} to transmit this codeword to the receiver. The receiver recovers a constant fraction of the symbols of the message (encoded with the outer code), and then relies on the outer code's erasure correction algorithm to decode the original message. In our protocol, in expectation,  transmitting a message of $k'$ symbols requires the receiver to process slightly more than $2k'$ encoding symbols, send $O(1)$ feedback messages, and the expected computational cost for processing all the received symbols is $O(k')$. 

\begin{definition}[The code $(k',\mathcal{R}_{k},1-\gamma)$] 
We use the notation $(k',\mathcal{R}_{k},1-\gamma)$ to refer to the following composed code.
\begin{itemize} 
\item A message $x$ of $k'$ symbols is first encoded with an outer erasure code $\mathcal{R}_{k} : \mathbb{F}_2^{k'} \to \mathbb{F}_2^{k}$ to produce a codeword $\mathcal{R}_{k}(x)$ of length $k$.
\item This codeword is then transmitted to the receiver using the real-time erasure correcting protocol of \cite{beimel2007rt}. Unlike their protocol, the receiver terminates the transmission after recovering a $(1-\gamma)$ fraction of the $k$ transmitted symbols.
\end{itemize}
\end{definition}

\begin{definition}[Decoding $(k',\mathcal{R}_{k},1-\gamma)$] 
The decoding of $(k',\mathcal{R}_{k},1-\gamma)$ occurs in two steps. First, the receiver recovers a $(1-\gamma)$ fraction of the $k$ transmitted symbols using the protocol of \cite{beimel2007rt}. Then the receiver uses the decoding algorithm of $\mathcal{R}_{k}$ in order to recover $x$. The overall decoding fails if the decoding of $\mathcal{R}_{k}$ fails. 
\end{definition}
 The code $(k',\mathcal{R}_{k},1-\gamma)$ is encoded and decoded using the following algorithms.

	\begin{algorithm}[H]

			\begin{itemize}
  			 \item Set $r =0$. Encode the message $x$ using the outer code $\mathcal{R}_{k}$ as $\{x_1,\dots,x_{k}\}$. Proceed to Algorithm 2
			\end{itemize}

		\caption{Step 1 for the encoder}
		\label{}
	\end{algorithm}

	\begin{algorithm}[H]

			\begin{itemize}
			    \item If a termination message is received, stop transmission. Otherwise,
			    \item If an acknowledgement is received from the receiver, update $r \leftarrow r + 1$
			    \item Choose the encoding degree $$d = \left\lfloor\frac{k+1}{k-r}\right\rfloor$$
			    \item Choose $d$ indices $\{i_1,\dots, i_d\}$ independently and uniformly at random from $\{1,\dots,k\}$
			    \item Compute $e = \oplus_{i=1}^{d}x_i$
			    \item Send $(e, \{i_1,\dots, i_d\}, d)$
			    
			\end{itemize}

		\caption{Step 2 for the encoder (Loop)}
		\label{}
	\end{algorithm}
	
		\begin{algorithm}[H]

			\begin{itemize}
			    
	\item Initialize the set of known symbols to be empty and $r = 0$. Proceed to Algorithm \ref{alg:receiver2}

			  \end{itemize}
	
		\caption{Step 1 for the decoder}
		\label{alg:receiver1}
	\end{algorithm}

	\begin{algorithm}[H]

			\begin{itemize}
			    			    \item If the number of known symbols is $(1-\gamma)k$, send a termination message to the encoder and proceed to Algorithm \ref{alg:receiver3}. Otherwise, receive $(e,\{i_1,\dots,i_d\},d)$.
			    \item If exactly one of $\{i_1,\dots, i_d\}$, say $i_j$, is unknown, then decode $e$ by computing the xor with the known symbols. Then update the set of known symbols to include $i_j$ and set $r \leftarrow r + 1$.
			    \item Discard $(e,\{i_1,\dots,i_d\},d)$.
			    \item If $d \not= \left\lfloor\frac{k+1}{k-r}\right\rfloor$, send an acknowledgement to encoder.
			    
			\end{itemize}

		\caption{Step 2 for the decoder (Loop)}
		\label{alg:receiver2}
	\end{algorithm}
	
	\begin{algorithm}[H]

			\begin{itemize}
			    
			    \item Use the decoding algorithm for $\mathcal{R}_{k}$ to recover $\mathcal{R}_{k}(x)$. If the decoding fails, then output FAIL.
			    			\end{itemize}

		\caption{Step 3 for the decoder}
		\label{alg:receiver3}
	\end{algorithm}

We remark that it is not necessary to include the set of indices $\{i_1,\dots,i_d\}$ along with every encoding symbol, but it makes the exposition easier to do so. Specific schemes for avoiding this have been provided in previous works on rateless codes such as \cite{beimel2007rt} and \cite{luby2002lt}. Since the encoder in our protocol also stores $\mathcal{R}_k(x)$, we define 
 \begin{definition}[Memory overhead of $(k',\mathcal{R}_{k},1-\gamma)$] 
Pre-coding the message $x$ with $\mathcal{R}_{k}$ requires the sender to store $\mathcal{R}_{k}(x)$ as well as $x$. We refer to these additional bits of $\mathcal{R}_{k}(x)$ as the memory overhead of $(k',\mathcal{R}_{k},1-\gamma)$. 
\end{definition}

\section{Results}
Compared to the work of \cite{beimel2007rt}, our protocol reduces the complexity of the feedback channel from $O(\sqrt{k'})$ to $O(1)$, and the expected decoding complexity from $O(k'\log(k'))$ to $O(k')$ for a message of length $k'$. To accomplish this, we use an appropriate block code to pre-code the original message as described in section \ref{sec:prot}. We first explain why pre-coding results in improvements over the original protocol. 

Suppose $k$ bits are being transmitted using the real-time oblivious erasure correction protocol of \cite{beimel2007rt}. In their work, the work of the feedback channel is non-uniform, and heavily concentrated towards the end of the transmission process. In \cite{beimel2007rt}, it is shown that by the time $k-\sqrt{k+1}$ symbols have been decoded, the feedback channel has already sent $\sqrt{k+1}$ messages. We rely on the observation that for $\gamma > 0$ constant, the feedback channel only needs to send $O(1)$ messages during the decoding of the first $(1-\gamma)k$ symbols. Using the erasure tolerance of the outer code, we can restrict the use of the real-time oblivious protocol to recover only $(1-\gamma)k$ symbols of the message encoded with the outer code, and the original message can then be recovered by using the decoding algorithm of the outer code.  

The second improvement comes from the non-uniform nature of the decoding cost. It is shown in \cite{beimel2007rt}, that each encoding symbol of the optimal degree, has at least a constant (in fact, $\ge 1/e$) chance of being decoded rather than discarded by the receiver. However, the cost of decoding the symbols is proportional to the encoding degree, which increases non-uniformly through the protocol and is much higher towards the end. By only requiring the receiver to decode $(1-\gamma)k$ symbols, we can avoid sending and processing many high encoding-degree symbols towards the end of the protocol, and achieve a truly linear decoding cost. Our main theorem is the following.

\begin{theorem} [Main]\label{thm:main}
Choose $k' \in \N$ and $\gamma > 0$ a constant. Let $\mathcal{R}_{k}$ be a linear erasure code of rate $1-2\gamma$ and block length $k$ which can be decoded on a BEC with erasure probability $\gamma$ in $O(k\log(1/\gamma))$ time with success probability $1-o_{k}(1)$. Then the code $(k',\mathcal{R}_{k},1-\gamma)$ can be real-time decoded in expected $O(k'\log(1/\gamma))$ time, using $(1+\gamma)2k' + O(1)$ encoding symbols and $O(1)$ feedback messages in expectation. The memory overhead for the encoder is $\frac{1}{1-2\gamma}k'$ and the decoding may fail with probability at most $o_{k}(1)$.
\end{theorem}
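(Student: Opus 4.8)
The plan is to establish the five assertions of the theorem (memory overhead, feedback count, encoding-symbol count, decoding time, failure probability) essentially one at a time, thinking of the code as the outer code $\mathcal R_k$ sitting on top of a \emph{truncated} run of the protocol of \cite{beimel2007rt}, and coupling the two layers only in the failure analysis. The bookkeeping items come first. Since $\mathcal R_k$ has rate $1-2\gamma$ and message length $k'$, its block length is $k=\tfrac{1}{1-2\gamma}k'$, so the $k$ symbols of $\mathcal R_k(x)$ that the sender keeps are exactly the claimed memory overhead $\tfrac{1}{1-2\gamma}k'$. For the feedback, recall that the receiver sends an acknowledgement only when $d(r)=\lfloor(k+1)/(k-r)\rfloor$ changes value, plus one termination message at the end; as $r$ runs over $0,1,\dots,(1-\gamma)k-1$ the integer $d(r)$ is nondecreasing and never exceeds $\lfloor(k+1)/(\gamma k)\rfloor=O(1/\gamma)$, so it changes at most $O(1/\gamma)$ times, which is $O(1)$ since $\gamma$ is a constant -- deterministically, hence certainly in expectation.

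Next, the encoding-symbol count and the decoding time. Let $p_r$ be the probability, over the random encoding indices, that a received symbol of degree $d(r)$ is decodable when $r$ codeword symbols are already known; \cite{beimel2007rt} computes $p_r=\frac{\binom{r}{d(r)-1}(k-r)}{\binom{k}{d(r)}}\ge 1/e$, with $p_r=(k-r)/k$ throughout the initial degree-$1$ phase $r<(k-1)/2$. The number of symbols processed while $r$ is held fixed is geometric with mean $1/p_r$, so the expected number processed is $\sum_{r<(1-\gamma)k}1/p_r$, and the expected number of receiver bit-operations spent processing is $\Theta\bigl(\sum_{r<(1-\gamma)k}d(r)/p_r\bigr)$, since a degree-$d$ symbol costs $O(d)$ exclusive-ors to process. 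I would bound both sums by grouping $r$ into the phases $I_d=\{r:d(r)=d\}$: each phase has $|I_d|=\Theta(k/d^2)$, the largest degree reached is $O(1/\gamma)$ because $k-r\ge\gamma k$ throughout the truncated run, and $1/p_r\le e$ on every phase (and $\le 2$ on the degree-$1$ phase). Then $\sum_r d(r)/p_r\le e\sum_{d\le O(1/\gamma)}|I_d|\,d=\Theta(k)\sum_{d\le O(1/\gamma)}1/d=O(k\log(1/\gamma))$; adding the hypothesised $O(k\log(1/\gamma))$ cost of decoding $\mathcal R_k$ and using $k=\Theta(k')$ gives the claimed $O(k'\log(1/\gamma))$ expected decoding time. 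For the symbol count one must be sharper: $\sum_{r<(1-\gamma)k}1/p_r$ equals the expected cost of the full protocol of \cite{beimel2007rt} (at most $2k$) minus the expected cost of the omitted phases of degree $>1/\gamma$, and a careful evaluation of that saving is what yields $(1+\gamma)2k'+O(1)$; the $O(1)$ absorbs the $O(1/\gamma)$ rounding terms in the $|I_d|$ and the expected $O(1/\gamma^2)$ symbols wasted by repeated indices within a single encoding symbol.

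Finally, the failure probability: the only failure mode is the decoder of $\mathcal R_k$ failing on the recovered coordinates, so it suffices to show those coordinates form a uniformly random $(1-\gamma)$-fraction of $[k]$ and then invoke the hypothesis on $\mathcal R_k$. Because each encoding symbol's indices are i.i.d.\ uniform, the law of the (multi)set of indices is permutation-invariant, so conditioned on a received symbol being decodable at all, the coordinate it reveals is uniform among the currently unknown coordinates -- this is unaffected by cancellations from repeated indices -- and iterating shows the recovered set is uniform. Hence the $\gamma k$ un-recovered coordinates are a uniformly random erasure pattern of relative weight $\gamma$, i.e.\ the erasure pattern of a $\mathrm{BEC}(\gamma)$ conditioned on its typical weight, so $\mathcal R_k$ decodes it with probability $1-o_k(1)$.

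The step I expect to be the real obstacle is pinning down the constant in the encoding-symbol count: the crude bound $\sum 1/p_r\le e(1-\gamma)k$ is too weak, and one genuinely has to show that avoiding all phases of degree above $\approx 1/\gamma$ saves enough to absorb the gap between the codeword length $k$ and the message length $k'=(1-2\gamma)k$. A lesser subtlety is replacing ``$\mathrm{BEC}(\gamma)$ decodes whp'' by ``a uniformly random weight-$\gamma k$ erasure pattern decodes whp'', which is immediate for the concrete linear-time outer codes one has in mind (random-like or LDPC-style, whose $\mathrm{BEC}(\gamma)$ failure probability decays at least polynomially in $k$) but in general needs either such a quantitative version of the hypothesis or a monotonicity property of the decoder in the erasure set.
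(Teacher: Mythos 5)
Your overall decomposition (memory, feedback, processing cost, symbol count, failure probability) matches the paper's, and most pieces go through: your phase decomposition $|I_d|=\Theta(k/d^2)$, $d\le O(1/\gamma)$ yields the same $O(k\log(1/\gamma))$ harmonic sum that the paper computes directly for the decoding cost; the feedback bound is the paper's Lemma on $d(r)\le \lfloor (k+1)/(\gamma k)\rfloor$ verbatim; and your worry about replacing ``$\mathrm{BEC}(\gamma)$ decodes w.h.p.'' by ``a uniformly random weight-$\gamma k$ erasure pattern decodes w.h.p.'' is a legitimate subtlety that the paper silently elides. The genuine gap is exactly the one you flag: the constant in the encoding-symbol count. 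Unwinding the claim, with $k=k'/(1-2\gamma)$ you must show the truncated run processes at most $2(1+\gamma)(1-2\gamma)k=2k-2\gamma k-4\gamma^2k$ symbols in expectation, i.e.\ a saving strictly larger than $2\gamma k$ over the $2k$ of the full protocol. Your proposed route --- take the full protocol's expected cost (at most $2k$) and subtract the expected cost of the omitted tail --- cannot deliver this: to extract a saving that way you need a \emph{lower} bound on the omitted cost $\sum_{r\ge(1-\gamma)k}1/p_r$, and since $p_r\ge 1/e$ gives $1/p_r\le e$, the only trivial lower bound is $1/p_r\ge 1$, which saves only $\gamma k$ and is not enough.

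The paper closes this by re-deriving the upper bound for the truncated sum from scratch rather than by subtraction. For $0\le r\le k-k/e$ it uses $1/p_r\le k/(k-r)$ (valid because the optimal degree does at least as well as degree $1$, so $p(d(r),r)\ge p(1,r)=(k-r)/k$), and this partial harmonic sum is $k(H_k-H_{\lfloor k/e\rfloor})=k+O(1)$; for $k-k/e<r<(1-\gamma)k$ it charges $1/p_r\le e$ per step, contributing $e\cdot(k/e-\gamma k)=k-e\gamma k$. The total is $2k-e\gamma k+O(1)$, and since $e>2+4\gamma$ for small $\gamma$ this clears the target $2k-2\gamma k-4\gamma^2k$. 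The point is that the saving is not ``the actual expected cost of the omitted steps'' but the fact that the $2k$ figure is itself assembled from the per-step bound $e$ on the tail, and truncation deletes $\gamma k$ steps each of which was being charged $e$ in that assembly. (As a side remark, the resulting inequality $(2-e\gamma)/(1-2\gamma)\le 2(1+\gamma)$ only holds for $\gamma\le(e-2)/4$, a restriction the theorem statement leaves implicit; your proof would inherit the same caveat.)
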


In section \ref{sec:trunc} we analyze a truncated version of the real-time oblivious erasure correction protocol of \cite{beimel2007rt}. In section \ref{sec:proof} we combine this with an appropriate outer code to complete the proof of theorem \ref{thm:main} before providing a small discussion in the last section.

\section{Analysis of the truncated real-time oblivious protocol of \cite{beimel2007rt}}\label{sec:trunc}

Fix $\gamma > 0$ a constant and use $k$ to denote the number of message symbols being sent using the real-time erasure correction protocol of \cite{beimel2007rt}. Here we analyze the $(1-\gamma)$-truncated version of their protocol, in which we use their algorithm to only recover only $(1-\gamma)k$ message symbols. This section proves the following theorem.

\begin{theorem}
Fix $\gamma >0$ a constant. In the real-time oblivious erasure correction of \cite{beimel2007rt}, the expected number of encoding symbols required to decode $(1-\gamma)k$ message symbols is $2k - \gamma ek + O(1)$. The decoding can be achieved using an expected $O(1)$ feedback messages, and the expected decoding complexity is $O(k\log(1/\gamma))$.
\end{theorem}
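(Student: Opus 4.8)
The three estimates all rest on one structural fact: since the encoder draws the $d$ indices of each encoding symbol freshly and independently, the number of received symbols the receiver must process in order to raise its count of known symbols from $r$ to $r+1$ is a geometric random variable with parameter $p_r$, where $p_r=\frac{(k-r)\binom{r}{d(r)-1}}{\binom{k}{d(r)}}$ is the probability that a symbol of the optimal degree $d(r)=\floor{\tfrac{k+1}{k-r}}$ (and $d(k-1)=k$) is decodable given $r$ known symbols, and these waiting times are independent across $r$. Hence the expected number of symbols processed by the receiver in the truncated run is $\sum_{r=0}^{(1-\gamma)k-1}1/p_r$, and the whole of the first assertion is a statement about this sum.

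I would evaluate it as $\sum_{r=0}^{k-1}1/p_r-\sum_{r=(1-\gamma)k}^{k-1}1/p_r$. The first sum is exactly the quantity that \cite{beimel2007rt} shows to be $2k+O(1)$: on the range $r<(k-1)/2$ one has $d(r)=1$ and $1/p_r=k/(k-r)$, contributing $k\bigl(H_k-H_{\lceil(k+1)/2\rceil-1}\bigr)=k\ln 2+O(1)$, and the $d(r)\ge 2$ range accounts for the remaining $\approx(2-\ln 2)k$. So the theorem reduces to showing that the tail $\sum_{r=(1-\gamma)k}^{k-1}1/p_r=\sum_{j=1}^{\gamma k}1/p_{k-j}$ equals $\gamma e k+O(1)$, and this is where the work is. For $1\le j\le\gamma k$ the optimal degree is $d=\floor{(k+1)/j}$, which grows with $k$ whenever $j=o(k)$; writing $p_{k-j}=\tfrac{jd}{k}\prod_{i=0}^{d-2}\tfrac{k-j-i}{k-1-i}$ and expanding the logarithm, one obtains a \emph{two-sided} estimate that pins $1/p_{k-j}$ to $e$ up to a controlled error over the whole range, sharpening the one-sided bound $p_r\ge 1/e$ of \cite{beimel2007rt}. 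Because $j\mapsto 1/p_{k-j}$ varies only mildly — it rises from $1$ up to near $e$ and then decreases slightly, so it has bounded total variation — the sum differs from the corresponding integral by $O(1)$, and one reads off $\sum_{j=1}^{\gamma k}1/p_{k-j}=\gamma ek+O(1)$. The main obstacle is precisely this: controlling the optimal decoding probability from \emph{both} sides and bookkeeping the lower-order contributions (in particular of the $O(1)$ smallest values of $j$). Together with the first part this gives $2k-\gamma ek+O(1)$.

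For the feedback bound: as $r$ increases through $0,\dots,(1-\gamma)k-1$, the value $d(r)=\floor{\tfrac{k+1}{k-r}}$ is nondecreasing and takes only the values $1,2,\dots,\floor{\tfrac{k+1}{\gamma k+1}}=O(1/\gamma)$, so it changes at most $O(1/\gamma)$ times; the receiver sends one acknowledgement per change and one termination message, for a total of $O(1/\gamma)=O(1)$ feedback bits received by the sender — deterministically, not merely in expectation.

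For the decoding complexity: processing an encoding symbol of degree $d$ — scanning its $d$ indices and, if exactly one is unknown, XOR-ing it against the $d-1$ known symbols — costs $O(d)$ single-bit operations. Grouping the run by the current degree $c$: the degree is $1$ for $\approx k/2$ of the decodes, each needing at most $2$ processed symbols in expectation at cost $O(1)$, for a total of $O(k)$; for each $c\ge 2$ the degree equals $c$ for $\approx\tfrac{k}{c(c+1)}$ of the decodes, each needing at most $e$ processed symbols in expectation (by $p_r\ge 1/e$) at cost $O(c)$, for a total of $O(k/c)$. Summing over $c=2,\dots,O(1/\gamma)$ gives $O\bigl(k\sum_{c\le 1/\gamma}1/c\bigr)=O(k\log(1/\gamma))$, so the expected decoding cost is $O(k\log(1/\gamma))$.
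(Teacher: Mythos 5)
Your feedback and decoding-complexity arguments are correct and essentially the paper's: the feedback count is the number of distinct values of $\floor{\frac{k+1}{k-r}}$ over $r\le(1-\gamma)k$, which is $O(1/\gamma)$, and the decoding cost is the harmonic-type sum $\sum_{r} e\cdot d(r)=O(k\log(1/\gamma))$ (you organize it by degree, the paper by $r$; same bound). The gap is in the main count of encoding symbols, precisely in the step you identify as ``where the work is.'' You write the truncated cost as $\sum_{r=0}^{k-1}1/p_r-\sum_{j=1}^{\gamma k}1/p_{k-j}$ and assert that the first sum is $2k+O(1)$ and the tail is $\gamma ek+O(1)$; both assertions are off by $\Theta(k)$, not $O(1)$. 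For the tail: $1/p_{k-j}$ is pinned to $e$ only when $j=o(k)$. For $j=xk$ with $x$ a constant, $d=\floor{1/x}$ is a constant and $1/p_{k-j}\to\frac{1}{\floor{1/x}\,x(1-x)^{\floor{1/x}-1}}$, which is bounded away from $e$ (about $2.27$ at $x=0.3$; more generally a short expansion gives $1/p_{k-j}=e\left(1-\tfrac{j}{2k}+O((j/k)^2)\right)$). Since a constant fraction of the range $1\le j\le\gamma k$ has $j=\Theta(k)$, the tail equals $\gamma ek-\Theta(\gamma^2k)$, not $\gamma ek+O(1)$ --- your own observation that the summand ``decreases slightly'' over a $\Theta(k)$-length range already implies a $\Theta(k)$ shortfall, which bounded total variation cannot repair. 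Similarly, the $2k$ figure from \cite{beimel2007rt} for the full run is an upper bound obtained from the same relaxation $p_r\ge 1/e$ on the high-degree segments (roughly $k\ln 2 + e\cdot k/2$); the exact value of $\sum_r 1/p_r$ is a strictly smaller constant times $k$, so treating it as $2k+O(1)$ is unjustified. Executed exactly, your ``whole run minus tail'' computation would produce a different constant than $2k-\gamma ek$.

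The paper sidesteps all of this by proving only an upper bound on the truncated run directly (its lemma says ``less than,'' which is all the main theorem uses): for the first $(1-1/e)k$ recoveries it uses $p(d(r),r)\ge p(1,r)=\frac{k-r}{k}$, so that portion costs at most $k\sum_{i>k/e}\frac{1}{i}=k+O(1)$; for the remaining $(1/e-\gamma)k$ recoveries it uses only $p(d(r),r)\ge 1/e$, costing at most $e$ each, i.e.\ at most $k-e\gamma k$. Adding gives $2k-e\gamma k+O(1)$. If you want to keep your decomposition you would need a lower bound on the tail of the form $\gamma ek-O(1)$, and that is false for constant $\gamma$.
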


\begin{definition}
We use $p(d,r)$ to denote the revealing probability of a encoding symbol of degree $d$ when $r$ (out of $k$) message symbols are known to the receiver. That is, it is the probability (over the choice of symbols used in the encoding symbol), that an encoding symbol of degree $d$ will be successfully decoded by the receiver.

\end{definition}

\begin{definition}
We use $d(r)$ to denote the optimal degree for transmission when $r$ symbols are known to the receiver. For fixed $r$,

\[d(r) = \argmax_{1\le d \le k}(p(d,r))\]
\end{definition}

We recall the following facts.
\begin{lemma}[Lemma 4.5 from \cite{beimel2007rt}]
For $0 \le r < k-1$, \[d(r) = \left\lfloor\frac{k+1}{k-r}\right\rfloor\]
while for $r =k-1$, d(r) = k. 
\end{lemma}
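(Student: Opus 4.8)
The plan is to show that, for each fixed $r$, the sequence $p(1,r),p(2,r),\dots$ is unimodal in $d$ and to pin down its peak by comparing consecutive terms. First I would record the exact form of $p(d,r)$: an encoding symbol of degree $d$ is decodable precisely when exactly one of the $d$ message symbols it combines is still unknown to the receiver, so with the uniform choice of $d$ symbols as in \cite{beimel2007rt},
\[
p(d,r) \;=\; \frac{\binom{r}{d-1}\,(k-r)}{\binom{k}{d}} \qquad (1 \le d \le r+1),
\]
and $p(d,r) = 0$ once $d \ge r+2$ (one cannot pick $d \ge r+2$ symbols with only one unknown among them, consistent with the sanity check in Section \ref{sec:protocol}). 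Hence $d(r) = \argmax_{1 \le d \le k} p(d,r)$ is attained inside the window $\{1,\dots,r+1\}$.

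Next I would compute the ratio of consecutive terms. Cancelling binomial coefficients,
\[
\frac{p(d+1,r)}{p(d,r)} \;=\; \frac{\binom{r}{d}}{\binom{r}{d-1}}\cdot\frac{\binom{k}{d}}{\binom{k}{d+1}} \;=\; \frac{(r-d+1)(d+1)}{d\,(k-d)},
\]
and a one-line rearrangement shows this is $\ge 1$ exactly when $d(k-r) \le r+1$, i.e.\ $d \le \frac{r+1}{k-r}$; equivalently $p(d,r) \ge p(d-1,r)$ iff $d \le \frac{r+1}{k-r} + 1 = \frac{k+1}{k-r}$. Thus $p(\cdot,r)$ increases while $d \le \frac{k+1}{k-r}$ and decreases afterward (with equality $p(d,r)=p(d-1,r)$ possible only when $\frac{k+1}{k-r}$ is an integer), so a maximizer is the largest integer not exceeding $\frac{k+1}{k-r}$, namely $\lfloor\frac{k+1}{k-r}\rfloor$ — provided this integer lies in the feasible window $\{1,\dots,r+1\}$.

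It then remains to verify the feasibility window and to handle the boundary $r = k-1$ separately. Since $\frac{k+1}{k-r} \ge \frac{k+1}{k} > 1$ the floor is at least $1$; and for $1 \le r \le k-2$ one has $(r+1)(k-r) = k + r(k-r-1) \ge k+1$, so $\frac{k+1}{k-r} \le r+1$ and a fortiori $\lfloor\frac{k+1}{k-r}\rfloor \le r+1$ (the case $r = 0$ gives $\lfloor\frac{k+1}{k}\rfloor = 1 = r+1$ outright), which confirms $d(r) = \lfloor\frac{k+1}{k-r}\rfloor$ for $0 \le r < k-1$. For $r = k-1$ the feasible window is all of $\{1,\dots,k\}$ and the ratio collapses to $\frac{(k-d)(d+1)}{d(k-d)} = \frac{d+1}{d} > 1$ for every $d \le k-1$, so $p(\cdot,k-1)$ is strictly increasing on $\{1,\dots,k\}$ and $d(k-1) = k$ (indeed $p(k,k-1) = 1$). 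I expect the only mildly delicate points to be bookkeeping at the extremes — the equality case of the consecutive-ratio test, the degenerate $r = 0$, and the observation that $\lfloor\frac{k+1}{k-r}\rfloor$ would be $k+1 > k$ at $r = k-1$, which is exactly why that case is stated apart — but the argument itself is the routine unimodality-of-a-binomial-ratio computation, with no real obstacle.
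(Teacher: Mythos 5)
The paper imports this lemma from \cite{beimel2007rt} without giving a proof, so there is nothing in-paper to compare against; your argument is correct and is the standard one for this statement. The key steps all check out: the ratio $\frac{p(d+1,r)}{p(d,r)}=\frac{(r-d+1)(d+1)}{d(k-d)}$ is $\ge 1$ precisely when $d(k-r)\le r+1$, the window bound $(r+1)(k-r)=k+r(k-r-1)\ge k+1$ for $1\le r\le k-2$ places $\lfloor\frac{k+1}{k-r}\rfloor$ inside $\{1,\dots,r+1\}$, and the $r=k-1$ boundary (where the formula would exceed $k$) is correctly split off, with the only cosmetic caveat being the tie $p(d,r)=p(d-1,r)$ when $\frac{k+1}{k-r}$ is an integer, which you already flag.
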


\begin{lemma}[Lemma 4.8 from \cite{beimel2007rt}]\label{lem:1/e}
For $1 \le r \le k$, \[p(d(r),r) \ge \frac{1}{e}.\]
\end{lemma}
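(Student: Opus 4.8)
The plan is to work directly from the combinatorial meaning of $p(d,r)$. An encoding symbol of degree $d$ reveals a new message symbol precisely when, among its $d$ randomly chosen distinct indices, exactly one is currently unknown and the other $d-1$ are already known; counting these choices gives the exact hypergeometric expression
\[
p(d,r) \;=\; \frac{(k-r)\binom{r}{d-1}}{\binom{k}{d}},
\]
which I would establish first. Writing $s=k-r$ for the number of unknown symbols and using $\binom{k}{d}=\tfrac{k}{d}\binom{k-1}{d-1}$, this factors as
\[
p(d,r) \;=\; \frac{ds}{k}\prod_{i=0}^{d-2}\frac{r-i}{k-1-i},
\]
a form in which the prefactor $\tfrac{ds}{k}$ is close to $1$ and each factor in the product is close to $\tfrac{r}{k}=1-\tfrac{s}{k}$. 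Note $\tfrac{r-i}{k-1-i}=\bigl(1+\tfrac{s-1}{r-i}\bigr)^{-1}$, since $r+s=k$.

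Next I would substitute the optimal degree $d=d(r)=\floor{\frac{k+1}{k-r}}$ from Lemma 4.5. The one arithmetic fact driving everything is that the floor gives $ds\le k+1$, and since $d$ is a positive integer also $ds> k+1-s$. The heuristic is that with $d\approx k/s$ the product behaves like $(1-s/k)^{k/s}\approx e^{-1}$ while the prefactor is $\approx 1$, which is exactly where the constant $1/e$ comes from. To make this rigorous I would take logarithms,
\[
\ln p(d,r) \;=\; \ln\frac{ds}{k} \;-\; \sum_{i=0}^{d-2}\ln\!\left(1+\frac{s-1}{r-i}\right),
\]
and bound the subtracted sum using $\ln(1+x)\le x$ together with $\tfrac{1}{r-i}\le\tfrac{1}{r-d+2}$, giving $\sum_{i=0}^{d-2}\tfrac{s-1}{r-i}\le \tfrac{(d-1)(s-1)}{r-d+2}$. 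The crucial cancellation is that $(d-1)(s-1)\le r-d+2$ is \emph{equivalent} to $ds\le k+1$ (both expand to $ds\le k+1$), so the subtracted sum is at most $1$. The edge cases I would dispatch directly: whenever $r<(k-1)/2$ one has $d(r)=1$ and $p=\tfrac{k-r}{k}>\tfrac12$, and at $r=k-1$ one has $d=k$ and $p=1$ (with $r=k$ degenerate), all comfortably above $1/e$.

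The step I expect to be the main obstacle is controlling the \emph{constant} in the intermediate regime $d\approx\sqrt{k}$, because there the bound is genuinely tight: as $k\to\infty$ one has $p(d(r),r)\to 1/e$ from above by only a vanishing margin, so no inequality in the chain may leak a constant. In particular the clean estimate $\sum\tfrac{s-1}{r-i}\le 1$ alone does not suffice, since the prefactor $\tfrac{ds}{k}$ can fall below $1$ (when $ds<k$) and contributes a positive $\ln\tfrac{k}{ds}$ that must also be absorbed into the budget of $1$. I would therefore have to interlock the floor bound on both pieces simultaneously, using $\tfrac{ds}{k}>\tfrac{k+1-s}{k}$ and exploiting that sampling \emph{without} replacement makes each factor $\tfrac{r-i}{k-1-i}$ strictly exceed the with-replacement value $\tfrac{r}{k}$ (which holds because the relevant indices satisfy $i<r/s$ once $s\ge 2$), and checking the small-degree cases $d=2,3$ by hand where the prefactor deficit $\ln\tfrac{k}{ds}$ is largest. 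Arranging these estimates so the exponent stays at least $-1$ uniformly in $r$ is the delicate part; once that is done, the remainder is bookkeeping.
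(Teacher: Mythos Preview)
The paper does not prove this statement: it is quoted as Lemma~4.8 of \cite{beimel2007rt} and used as a black box (only the conclusion $p(d(r),r)\ge 1/e$ is invoked, in Lemmas~\ref{lem:dec} and~\ref{lem:encodingsymbols}). There is therefore no proof in the present paper against which to compare your proposal.

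On the proposal itself: your setup is correct. The hypergeometric formula, the product factorization $p(d,r)=\frac{ds}{k}\prod_{i=0}^{d-2}\frac{r-i}{k-1-i}$, the algebraic equivalence $(d-1)(s-1)\le r-d+2\iff ds\le k+1$, and the disposal of the edge cases $d(r)=1$ and $r=k-1$ are all right. Your diagnosis of the obstacle is also accurate: the chain $\ln p\ge \ln\frac{ds}{k}-1$ only gives $p\ge 1/e$ when $ds\ge k$, and for small $d\ge 2$ (e.g.\ $d=2$ with $s$ near $(k+1)/3$, where $ds\approx 2k/3$) it falls short. Your monotonicity claim $\frac{r-i}{k-1-i}>\frac{r}{k}$ for $0\le i\le d-2$ and $s\ge 2$ is correct as stated (it reduces to $i<r/s$, and $d-2\le (k+1)/s-2<r/s$ once $s\ge 2$). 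What is genuinely missing is the execution of the final step: you say that interlocking the prefactor bound $\frac{ds}{k}>\frac{k+1-s}{k}$ with the product bound and hand-checking $d=2,3$ will close the gap, but you do not carry this out, and since the inequality is asymptotically tight (at $s\sim\sqrt{k}$ one has $p\to 1/e$), no slack is available and this is not just bookkeeping. So the plan is sound but incomplete at exactly the point you yourself flag as delicate.
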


We now analyze truncated real-time oblivious erasure correction.
\subsection{Decoding complexity}
\begin{lemma}\label{lem:dec}
Fix $\gamma >0$ a constant. In order to recover a $(1-\gamma)$ fraction of the message symbols, the receiver needs to perform $O(k\log(1/\gamma))$ operations in expectation.  
\end{lemma}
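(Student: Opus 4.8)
The plan is to bound the expected cost by decomposing the protocol into phases according to the optimal degree $d(r)$, and summing the expected cost per phase. The key quantitative inputs are Lemma~\ref{lem:1/e} (each optimal-degree symbol is decoded with probability $\ge 1/e$, so the expected number of processed symbols needed to advance $r$ by one is $\le e$) and the formula $d(r) = \lfloor (k+1)/(k-r)\rfloor$ from the cited lemma. The cost of processing one encoding symbol of degree $d$ is $O(d)$ single-bit operations (one XOR per included index), so the expected cost to advance from $r$ known symbols to $r+1$ is $O(d(r))$. Truncating at $(1-\gamma)k$ means $r$ ranges over $0,\dots,(1-\gamma)k-1$, hence the total expected decoding cost is
\[
\sum_{r=0}^{(1-\gamma)k-1} O\!\left(d(r)\right) = O\!\left(\sum_{r=0}^{(1-\gamma)k-1} \left\lfloor\frac{k+1}{k-r}\right\rfloor\right).
\]

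First I would handle the small-degree regime: while $r \le k/2$ say, $d(r) = O(1)$, contributing $O(k)$ total. The interesting part is the tail $k/2 < r \le (1-\gamma)k$. Here I substitute $j = k-r$, so $j$ ranges from $\gamma k$ up to $k/2$, and the sum becomes $\sum_{j=\gamma k}^{k/2} \lfloor (k+1)/j\rfloor = O\!\left(k \sum_{j=\gamma k}^{k/2} 1/j\right)$. This is a harmonic-type sum, and $\sum_{j=\gamma k}^{k/2} 1/j = \ln\!\left(\frac{k/2}{\gamma k}\right) + O(1/(\gamma k)) = \ln(1/(2\gamma)) + O(1) = O(\log(1/\gamma))$. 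Multiplying by the $k$ out front gives $O(k\log(1/\gamma))$, which dominates the $O(k)$ from the small-degree regime. Combining the two regimes yields the claimed $O(k\log(1/\gamma))$ bound.

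The main subtlety — and the place to be slightly careful rather than genuinely hard — is justifying the "expected $\le e$ processed symbols per increment of $r$" step rigorously across the whole run. One has to note that the number of processed symbols at degree $d(r)$ before a success is stochastically dominated by a geometric random variable with success probability $\ge 1/e$ (by Lemma~\ref{lem:1/e}), that these counts are independent across the values of $r$ conditioned on the received-symbol pattern, and that the revealing event for a fresh encoding symbol depends only on its own random indices, so linearity of expectation applies cleanly to the sum $\sum_r (\text{expected symbols at level } r)\cdot O(d(r))$. A minor point is that $d(r)$ changes value only at certain thresholds of $r$, so one could alternatively group the sum into phases of constant $d$ and bound the length of each phase; either bookkeeping gives the same harmonic sum. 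No concentration argument is needed since the statement is only about expectation.
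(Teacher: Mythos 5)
Your proposal is correct and follows essentially the same route as the paper's proof: both use Lemma~\ref{lem:1/e} to bound the expected number of processed symbols per newly decoded message symbol by $e$, charge $O(d(r))$ per processed symbol, and reduce the total to the harmonic-type sum $k\sum_{j=\gamma k}^{k}1/j = O(k\log(1/\gamma))$. The paper evaluates this sum directly as a difference of harmonic numbers rather than splitting into a constant-degree regime and a tail, but the computation is the same.
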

\begin{proof}
First note that due to Lemma \ref{lem:1/e}, each encoding symbol of degree $d(r)$ is decoded with probability at least $1/e$, which means that in expectation, $e$ such encoding symbols  are sufficient to reveal one message symbol. Each time an encoding symbol is processed by the receiver, it needs to do work proportional to the degree of the encoding symbol to either discard or decode the encoding symbol. So expected computation for decoding $(1-\gamma)k$ symbols is

\begin{align*}
    \EE(\text{number of operations}) &= O\left(\sum_{r=0}^{(1-\gamma)k}\EE(r)d(r)\right)\\
    & = O\left(\sum_{r=0}^{(1-\gamma)k}e \left\lfloor\frac{k+1}{k-r}\right\rfloor\right)\\
    & = O\left(k\sum_{r=0}^{(1-\gamma)k} \frac{1}{k-r}\right)\\
     & = O\left(k\left(\sum_{i=1}^{k} \frac{1}{i} - \sum_{j=1}^{\gamma k - 1} \frac{1}{j}\right)\right)\\
    & = O\left(k\log(1/\gamma)\right)
\end{align*}
where the last equality uses the standard fact that $\log(k +1)\le \sum_{i=1}^{k} \frac{1}{i} \le \log(k) + 1$.
\end{proof}
\subsection{Feedback channel}

\begin{lemma}\label{lem:feedback}
Fix $\gamma >0$ a constant. In order to recover a $(1-\gamma)$ fraction of the message symbols, the feedback channel needs to successfully transmit at most $O(1)$ bits.   
\end{lemma}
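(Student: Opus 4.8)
The plan is to count the number of times the optimal degree $d(r) = \lfloor (k+1)/(k-r) \rfloor$ changes as $r$ ranges over $0, 1, \dots, (1-\gamma)k$, since a feedback bit is sent exactly when the decoder observes a symbol whose degree no longer matches $d(r)$ after decoding — i.e., whenever $d(r)$ jumps. First I would observe that $d$ is a nondecreasing function of $r$, so the total number of distinct values it takes on the range $r \in \{0,\dots,(1-\gamma)k\}$ bounds the number of feedback messages. The key point is that at $r = (1-\gamma)k$ we have $k - r = \gamma k$, so $d((1-\gamma)k) = \lfloor (k+1)/(\gamma k) \rfloor \le 2/\gamma$ for $k$ large; since $\gamma$ is a constant, $d$ ranges over at most $\lceil 2/\gamma \rceil = O(1)$ integer values over the entire truncated protocol, hence it changes at most $O(1)$ times.

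The next step is to translate "number of degree changes" into "number of feedback bits sent." Here I would note that in Algorithm \ref{alg:receiver2} the receiver sends an acknowledgement precisely when, after updating $r$, it receives a symbol with the stale degree; in the worst case a single jump in $d(r)$ triggers one acknowledgement (and if $d$ skips several integer values at once, that still counts as one transition event but may require the sender a few rounds to catch up — still $O(1)$ per jump). Either way the total is bounded by a constant multiple of the number of jumps, which is $O(1)$. I should also address the "successfully transmit" phrasing: because the feedback channel itself may be lossy, the receiver may need to resend an acknowledgement until it is received; but the statement measures only the number of bits the sender must \emph{successfully} receive, which the model (as set up in the introduction) counts directly, so this is $O(1)$ as well. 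Note this bound is deterministic, not merely in expectation, so the "in expectation" wording in the surrounding theorem is slack here.

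The main obstacle — really the only subtlety — is making the constant explicit and handling small $k$: for $k$ below some threshold depending on $\gamma$, the floor function and the $+O(1)$ slack need care, but since $\gamma$ is a fixed constant one can simply absorb all such edge cases into the $O(1)$. A secondary point worth stating cleanly is the monotonicity of $d(r)$, which follows because $k+1 > 0$ and $k - r$ is positive and decreasing on the relevant range; this guarantees we are counting transitions of a monotone step function rather than something that could oscillate. With monotonicity and the bound $d((1-\gamma)k) = O(1/\gamma) = O(1)$ in hand, the proof is immediate.
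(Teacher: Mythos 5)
Your proposal is correct and follows essentially the same route as the paper: both arguments use the monotonicity of $d(r)$ together with the bound $d((1-\gamma)k) = \left\lfloor\frac{k+1}{\gamma k}\right\rfloor \le 2/\gamma$ to conclude that $d(r)$ takes at most $O(1)$ distinct values over the truncated protocol, hence at most $O(1)$ feedback messages are needed. Your additional remarks on the lossy feedback channel and the deterministic nature of the bound are consistent with the paper's model and do not change the argument.
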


\begin{proof}
The receiver sends a feedback message to update $d(r)$ every time the value of $d(r) = \left\lfloor\frac{k+1}{k-r}\right\rfloor$ changes. Note that $d(r)$ is monotone increasing as $r$ goes from $0$ to $k-1$ and that $d(0) = 1$. Then notice that

\begin{align*}
    d((1-\gamma)k) &= \left\lfloor\frac{k+1}{\gamma k} \right\rfloor \le \frac{2}{\gamma}
\end{align*}

So during the decoding of the first $(1-\gamma)k$ symbols, there are at most $2/\gamma$ values of $d(r)$, and the receiver needs to succesfully send at most $2/\gamma$ feedback messages.
\end{proof}

\subsection{Number of encoding symbols}
We upper bound the expected number of encoding symbols required for decoding $(1-\gamma)k$ symbols.

\begin{lemma}\label{lem:encodingsymbols}
Fix $\gamma >0$ a constant. The expected number of encoding symbols required to decode the first $(1-\gamma)k$ symbols is less than $2k - ke\gamma + O(1)$.
\end{lemma}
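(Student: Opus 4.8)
The plan is to bound the expected number of encoding symbols by summing, over each value $r = 0, 1, \dots, (1-\gamma)k - 1$ of the number of known symbols, the expected number of encoding symbols of degree $d(r)$ that the receiver must process in order to advance from $r$ known symbols to $r+1$. Since each encoding symbol of optimal degree reveals a new symbol with probability exactly $p(d(r), r)$ (independently across symbols, conditioned on which are received), the number processed while $r$ symbols are known is geometrically distributed with mean $1/p(d(r), r)$. Hence the expected total is $\sum_{r=0}^{(1-\gamma)k - 1} \frac{1}{p(d(r), r)}$, plus possibly an $O(1)$ correction for the feedback latency (the sender may transmit a few symbols at a stale degree before receiving each acknowledgement, but by Lemma~\ref{lem:feedback} there are only $O(1/\gamma) = O(1)$ such acknowledgements, so this contributes only $O(1)$).

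The heart of the argument is therefore to show $\sum_{r=0}^{(1-\gamma)k-1} \frac{1}{p(d(r),r)} \le 2k - ke\gamma + O(1)$. For this I would go back to the explicit formula for the revealing probability from \cite{beimel2007rt}: when $r$ symbols are known and the degree is $d$, a symbol is revealed iff exactly one of its $d$ chosen indices lies among the $k - r$ unknown ones, so $p(d,r) = d \cdot \frac{k-r}{k} \cdot \left(\frac{r}{k}\right)^{d-1}$ in the with-replacement model used in the algorithm (or the analogous hypergeometric expression in the without-replacement model of the original paper). Plugging in $d = d(r) = \lfloor \frac{k+1}{k-r}\rfloor$ and using that $d(r) \approx \frac{k+1}{k-r}$, one gets $p(d(r),r) \approx \left(1 + \frac{1}{k}\right)\left(1 - \frac{k-r}{k}\right)^{\frac{k+1}{k-r}-1}$, which for $r$ not too close to $k$ is close to $1/e$ from above; the refinement needed here is a two-sided estimate, since $\ge 1/e$ alone only gives the trivial bound $\le ek$, not the savings term $-ke\gamma$.

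The main obstacle, then, is extracting the correct constant in front of $k$: I expect to need to show that $\frac{1}{p(d(r),r)}$ averages to something meaningfully below $e$ over the range $0 \le r \le (1-\gamma)k$ — intuitively because for small $r$ (e.g. $r = 0$, where $d = 1$ and $p = 1$) the revealing probability is much larger than $1/e$. Quantitatively, I would split the sum: for $r$ in the ``bulk'' where $d(r)$ is large, $\frac{1}{p(d(r),r)}$ is close to $e$ but bounded by $e$ times $\left(1 + O\left(\frac{k-r}{k}\right)\right)$ from the rounding in the floor and the $(1+1/k)$ factor, contributing at most $e(1-\gamma)k + O(\log(1/\gamma))$ — wait, I would need the comparison to go the other way, so more carefully I would use the fact established in \cite{beimel2007rt} that the \emph{exact} expected number of symbols to decode all $k$ is $2k + o(k)$, and subtract off the expected number of symbols processed while $r$ ranges over the last $\gamma k$ values, $(1-\gamma)k \le r \le k-1$. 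In that tail each step needs $\ge e$ symbols in expectation (again by Lemma~\ref{lem:1/e}), so the tail accounts for $\ge e \gamma k - O(1)$ symbols; subtracting this from the full $2k + O(1)$ gives the claimed $\le 2k - e\gamma k + O(1)$ for the truncated protocol. This ``subtract the expensive tail'' route sidesteps re-deriving the bulk estimate and is the cleanest path; the one technical point to verify is that the $2k + o(k)$ total from \cite{beimel2007rt} has error term that is genuinely $O(1)$ (or at least $o(k)$, which suffices since $\gamma$ is constant), and that the tail lower bound $e\gamma k - O(1)$ is legitimate, i.e. that truncation only removes symbols and cannot somehow increase the count elsewhere — which is immediate since the degree schedule and revealing probabilities for the first $(1-\gamma)k$ steps are identical in the truncated and untruncated protocols.
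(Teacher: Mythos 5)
Your setup is fine (expected count $=\sum_{r=0}^{(1-\gamma)k-1} 1/p(d(r),r)$ plus an $O(1)$ correction for feedback latency), but the ``subtract the expensive tail'' step that carries your whole argument has the key inequality backwards. To conclude that the truncated cost is at most $(2k+O(1)) - e\gamma k$, you need the tail cost $\sum_{r\ge(1-\gamma)k} 1/p(d(r),r)$ to be \emph{at least} $e\gamma k - O(1)$, i.e.\ you need $p(d(r),r)\le 1/e$ essentially throughout the tail. Lemma~\ref{lem:1/e} asserts the opposite, $p(d(r),r)\ge 1/e$, so it only gives the tail the upper bound $e\gamma k$; plugging that into ``full minus tail'' produces a \emph{lower} bound on the truncated cost, not an upper bound. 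Worse, the inequality you actually need is false near the end of the protocol: $p(d(k-1),k-1)=1$ (the unique degree-$k$ symbol always decodes) and $p(d(k-2),k-2)\approx 1/2$, both well above $1/e$, so the tail is genuinely cheaper than $e\gamma k$. A secondary issue: even granting the tail bound, importing the untruncated total as $2k+o(k)$ would leave an $o(k)$ error term, which does not match the claimed $+O(1)$.

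The paper instead bounds the truncated sum directly, splitting at $r=(1-1/e)k$ rather than at $(1-\gamma)k$. For $0\le r\le (1-1/e)k$, optimality of $d(r)$ gives $p(d(r),r)\ge p(1,r)=\frac{k-r}{k}$, so this portion is at most $\sum_{r}\frac{k}{k-r}$, a harmonic sum that evaluates to $k+O(1)$. For $(1-1/e)k< r\le(1-\gamma)k$ (the only extra range when $\gamma<1/e$), Lemma~\ref{lem:1/e} is used in the correct direction: each step costs at most $e$ symbols in expectation, contributing at most $e\left(\frac{1}{e}-\gamma\right)k = k - e\gamma k$. Adding the two pieces gives $2k - e\gamma k + O(1)$. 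This is close to the direct route you started down (and correctly sensed required the comparison ``to go the other way'') before switching to the subtraction argument; the resolution is that no two-sided estimate on $p(d(r),r)$ is needed at all --- only the two one-sided lower bounds $p(d(r),r)\ge p(1,r)$ and $p(d(r),r)\ge 1/e$, each applied on the range where it is strong enough.
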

\begin{proof}
We lower bound the expected number of encoding symbols processed during the recovery of the first $(1-1/e)$ fraction of message symbols separately from the last $1/e$ fraction. In the case $\gamma \ge 1/e$, we note that $p(d(r),r) \ge p(1,r) = \frac{k-r}{k}$, and the expected number of symbols can be upper bounded by 

\begin{align*}
\sum_{r = 0}^{\lfloor k-\gamma k\rfloor}\frac{k}{k-r} &= k\sum_{i = \lfloor \gamma k\rfloor}^{k}\frac{1}{i}\\
&\le k(H_{k} - H_{\lfloor \gamma k\rfloor -1})\\
&\le k\left(\log(k) + \varphi + \frac{1}{2k} - \log\left(\frac{k}{e} - 2\right) - \varphi\right) \\
&\le k \left(\log\left(\frac{ek}{k - 2e}\right) + \frac{1}{2k}\right)\\
&\le k - k\log\left(1 - \frac{2e}{k}\right) +\frac{1}{2}\\
&\le k + 4e +\frac{1}{2}.
\end{align*}
Here, $H_k$ is the $k$th harmonic number. The second inequality uses the well-known fact that $\log(k) + \varphi \le H_k \le \log(k)+\varphi + \frac{1}{2k}$ where $\varphi$ is the Euler-Mascheroni constant. The final inequality uses the fact that $-\log(1-x) \le 2x$ for $0\le x \le \frac{1}{2}$.

Recall that the probability of decoding each symbol is bounded below by $1/e$ by Lemma \ref{lem:1/e}. Now assume that $\gamma < 1/e$. The expected cost is at most

\begin{align*}
\sum_{r = 0}^{\lfloor k- k/e\rfloor}\frac{k}{k-r} + \sum_{\lfloor k - k/e \rfloor + 1}^{\gamma k}e &\le k + ek\left(\frac{1}{e} - \gamma\right) + O(1)\\
&= 2k - \gamma ek + O(1).
\end{align*}

\end{proof}

\section{Proof of Theorem \ref{thm:main}}\label{sec:proof}

Now we choose the appropriate outer code and prove the claims of Theorem \ref{thm:main}, which we restate for convenience. 
 
 \begin{theorem*}
Choose $k' \in \N$ and $\gamma > 0$ a constant. Let $\mathcal{R}_{k}$ be a linear erasure code of rate $1-2\gamma$ and block length $k$ which can be decoded on a BEC with erasure probability $\gamma$ in $O(k\log(1/\gamma))$ time with success probability $1-o_{k}(1)$. Then the code $(k',\mathcal{R}_{k},1-\gamma)$ can be real-time decoded in expected $O(k'\log(1/\gamma))$ time, using $(1+\gamma)2k' + O(1)$ encoding symbols and $O(1)$ feedback messages in expectation. The memory overhead for the encoder is $\frac{1}{1-2\gamma}k'$ and the decoding may fail with probability at most $o_{k}(1)$.
\end{theorem*}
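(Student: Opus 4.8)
The plan is to instantiate the truncated protocol of Section~\ref{sec:trunc} with a concrete outer code and then push the parameters through the composition. The key observation is a change of variables: the outer code $\mathcal{R}_k$ has rate $1-2\gamma$, so $k = \frac{k'}{1-2\gamma}$, and we run the $(1-\gamma)$-truncated protocol of \cite{beimel2007rt} on the $k$ codeword symbols. Recovering a $(1-\gamma)$ fraction of the $k$ transmitted symbols means the receiver fails to recover at most a $\gamma$ fraction of them, i.e.\ the codeword $\mathcal{R}_k(x)$ is seen through a channel that erases at most $\gamma k$ positions; by hypothesis $\mathcal{R}_k$ then decodes $x$ in $O(k\log(1/\gamma))$ time with success probability $1-o_k(1)$, which gives the claimed failure bound since $o_k(1) = o_{k'}(1)$ for constant $\gamma$.

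First I would establish the memory overhead, which is immediate: the encoder stores $\mathcal{R}_k(x)$, which has $k = \frac{k'}{1-2\gamma}$ symbols, so the overhead beyond $x$ itself is $\frac{1}{1-2\gamma}k'$ as stated. Next I would invoke Lemma~\ref{lem:dec} to bound the decoding cost of the inner stage by $O(k\log(1/\gamma)) = O(k'\log(1/\gamma))$ (absorbing the constant $\frac{1}{1-2\gamma}$), and add the cost $O(k\log(1/\gamma))$ of running $\mathcal{R}_k$'s decoder, which is of the same order; the two together give the expected $O(k'\log(1/\gamma))$ decoding time. Then Lemma~\ref{lem:feedback} bounds the feedback by $2/\gamma = O(1)$ messages, plus the single termination message, so the total feedback complexity is $O(1)$ in expectation.

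For the number of encoding symbols I would apply Lemma~\ref{lem:encodingsymbols}, which gives an expected $2k - ke\gamma + O(1)$ symbols. Substituting $k = \frac{k'}{1-2\gamma}$, this is $\frac{2k'}{1-2\gamma} - O(1) + O(1) \le (1+\gamma')2k' + O(1)$ for an appropriate relabelling; more carefully, $\frac{2}{1-2\gamma} = 2 + \frac{4\gamma}{1-2\gamma} \le 2(1+\gamma)$ when $\gamma$ is small enough (say $\gamma \le 1/4$, so $\frac{2\gamma}{1-2\gamma}\le \gamma$), and the $-ke\gamma$ term only helps. One should either state the theorem for $\gamma$ below a small absolute constant or carry a cleaner constant $c$ with $(1+c\gamma)2k'$; I would make this explicit to avoid an off-by-constant gap. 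The main obstacle — really the only nontrivial point — is verifying that the erasure pattern seen by the outer decoder is genuinely (at most) a $\gamma$-fraction of erasures and that the randomness conditioning is consistent: the inner protocol's probabilistic guarantees are over the encoding map conditioned on which symbols arrive, while $\mathcal{R}_k$'s $1-o_k(1)$ success is over its own (independent) construction randomness, so I would note that these two sources of randomness are independent and a union bound over the two failure events (inner stage not reaching $(1-\gamma)k$ recovered symbols within the claimed expected count — which is handled in expectation, not w.h.p. — and outer decoding failure) leaves the failure probability at $o_k(1)$. The remaining steps are routine arithmetic.
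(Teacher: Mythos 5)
Your overall route is the same as the paper's: substitute $k=\frac{k'}{1-2\gamma}$, invoke Lemmas \ref{lem:dec}, \ref{lem:feedback} and \ref{lem:encodingsymbols} for the truncated inner protocol, add the outer decoder's $O(k\log(1/\gamma))$ cost, and attribute the $o_k(1)$ failure probability to the outer code. The memory, feedback, and decoding-time parts of your argument match the paper and are fine.

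There is, however, a genuine flaw in your count of encoding symbols. You discard the $-ke\gamma$ term from Lemma \ref{lem:encodingsymbols} on the grounds that "it only helps," and then try to prove $\frac{2}{1-2\gamma}\le 2(1+\gamma)$. That inequality is false for every $\gamma\in(0,1/2)$: since $(1+\gamma)(1-2\gamma)=1-\gamma-2\gamma^2<1$, we always have $\frac{1}{1-2\gamma}>1+\gamma$ (your intermediate claim $\frac{2\gamma}{1-2\gamma}\le\gamma$ fails for the same reason). The $-\gamma e k$ savings produced by truncating the inner protocol is not a bonus but the entire point: it is exactly what offsets the rate-loss blow-up $k=\frac{k'}{1-2\gamma}$. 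Keeping it, as the paper does, one gets $\frac{(2-e\gamma)k'}{1-2\gamma}+O(1)$, and the inequality $\frac{2-e\gamma}{1-2\gamma}\le 2(1+\gamma)$ reduces to $e\ge 2+4\gamma$, i.e.\ it holds for $\gamma\le\frac{e-2}{4}\approx 0.18$. So your instinct that there is a constant-management issue requiring a restriction on $\gamma$ is correct, but the restriction attaches to the paper's computation (which leaves it implicit), not to a version of the bound with the $-ke\gamma$ term removed, which cannot be repaired for any positive $\gamma$. Your final remark on independence of the two randomness sources and the union bound is a reasonable point of care that the paper glosses over, though note the inner stage is \emph{run until} $(1-\gamma)k$ symbols are recovered (the symbol count is what is random, not the success of that stage), so only the outer decoding contributes to the failure probability.
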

 
 Note that our outer code $\mathcal{R}_{k}$ has to satisfy the same properties as the outer codes of the well-known Raptor codes \cite{shokrollahi2006raptor}. In particular, $\mathcal{R}_{k}$ is a linear block code of block length $k$ which satisfies the following properties.  
 \begin{itemize}
 \item It has rate $R = 1-2\gamma$.
 \item It is decodable on a BEC with erasure probability $\frac{1-R}{2} = \gamma$ with success probability $1-o_{k}(1)$ using $O(k\log(1/\gamma))$ operations. 
 \end{itemize}
Several examples of appropriate codes are given in \cite{shokrollahi2006raptor}, for example, the codes of \cite{luby1997practical}. So we assume that $\gamma > 0$ is a fixed constant, and that for every $k$ we have linear code which satisfies the properties of $\mathcal{R}_{k}$ in Theorem \ref{thm:main}.

\begin{proof}[Proof of Theorem \ref{thm:main}] Recall that the code $(k',\mathcal{R}_{k},1-\gamma)$ first encodes the message $x$ of $k'$ symbols with $\mathcal{R}_{k}$, and then transmits $\mathcal{R}_{k}(x)$ over the BEC using the $(1-\gamma)$-truncated real-time oblivious decoding protocol. The decoder first recovers $(1-\gamma)k$ bits of $\mathcal{R}_{k}(x)$, and uses the decoding algorithm for $\mathcal{R}_{k}$ to recover $\mathcal{R}_{k}(x)$. Note that since $\mathcal{R}_{k}$ is linear, we can assume that it is systematic, and that $x$ can trivially be recovered from $\mathcal{R}_{k}(x)$. The decoding fails with probability $1-o_{k}(1)$, as this is the failure probability of the decoding algorithm of $\mathcal{R}_{k}$.

The claim about the expected number of feedback messages follows immediately from Lemma \ref{lem:feedback} since $k = \frac{k'}{1-2\gamma}$. From Lemma \ref{lem:encodingsymbols}, the expected number of encoding symbols is 

\[\frac{2k' - e\gamma k'}{1- 2\gamma} + O(1) \le (1+\gamma)2k' + O(1)\]

From Lemma \ref{lem:dec}, the expected decoding complexity is $O(k\log(1/\gamma))$ for the $(1-\gamma)$-truncated real-time oblivious decoding, followed by $O(k\log(1/\gamma))$ for decoding the outer code. Adding these together we get $O(k'\log(1/\gamma))$ as claimed. Finally, since the encoder needs to store $\mathcal{R}_{k}(x)$, we have a memory overhead of $\frac{1}{1-2\gamma}k'$ for the encoder.
\end{proof}

\section{Discussion}

\subsection*{Congestion avoidance}
One may also note that in the original protocol of \cite{beimel2007rt}, after $k-\sqrt{k+1}$ symbols have been recovered, the optimal encoding degree can change every time the receiver decodes a single encoding symbol. This can cause congestion on the feedback channel and messages can be sent with sub-optimal degree. While there are ways to remedy this (which increase the expected number of encoding symbols required for recovering the message), it is worth noting that our method avoids this problem entirely.  

\subsection*{Open problems}
As it stands, the receiver sends a message to the encoder to increase the encoding degree by one every time the value of $\left\lfloor \frac{k+1}{k-r}\right\rfloor$ changes. This maximizes the probability that each encoding symbol is decoded when it is processed by the receiver. Since we can reduce the expected cost of the feedback channel to $O(1)$, it would be interesting to also answer the following related question: given the ability to send a constant $t \in \N$ feedback messages, at what points should the receiver send messages to update the encoding degree (and by how much) as a function of $t$, to maximize the probability of decoding each encoding symbol?

\section{Acknowledgements}

The author would like to thank Mary Wootters, Moses Charikar and Li-Yang Tan for helpful comments and suggestions. This work was partially funded by NSF-CAREER grant CCF-1844628, NSF-BSF grant CCF-1814629, a Sloan Research Fellowship and a Google Graduate Fellowship.

\bibliographystyle{IEEEtran}
\bibliography{main}

% Generated by IEEEtran.bst, version: 1.14 (2015/08/26)
\begin{thebibliography}{10}
\providecommand{\url}[1]{#1}
\csname url@samestyle\endcsname
\providecommand{\newblock}{\relax}
\providecommand{\bibinfo}[2]{#2}
\providecommand{\BIBentrySTDinterwordspacing}{\spaceskip=0pt\relax}
\providecommand{\BIBentryALTinterwordstretchfactor}{4}
\providecommand{\BIBentryALTinterwordspacing}{\spaceskip=\fontdimen2\font plus
\BIBentryALTinterwordstretchfactor\fontdimen3\font minus
  \fontdimen4\font\relax}
\providecommand{\BIBforeignlanguage}[2]{{%
\expandafter\ifx\csname l@#1\endcsname\relax
\typeout{** WARNING: IEEEtran.bst: No hyphenation pattern has been}%
\typeout{** loaded for the language `#1'. Using the pattern for}%
\typeout{** the default language instead.}%
\else
\language=\csname l@#1\endcsname
\fi
#2}}
\providecommand{\BIBdecl}{\relax}
\BIBdecl

\bibitem{beimel2007rt}
A.~Beimel, S.~Dolev, and N.~Singer, ``Rt oblivious erasure correcting,''
  \emph{IEEE/ACM Transactions on Networking}, vol.~15, no.~6, pp. 1321--1332,
  2007.

\bibitem{tanenbaum1996computer}
A.~S. Tanenbaum, D.~Wetherall \emph{et~al.}, ``Computer networks,'' pp.
  I--XVII, 1996.

\bibitem{luby2001efficient}
M.~G. Luby, M.~Mitzenmacher, M.~A. Shokrollahi, and D.~A. Spielman, ``Efficient
  erasure correcting codes,'' \emph{IEEE Transactions on Information Theory},
  vol.~47, no.~2, pp. 569--584, 2001.

\bibitem{luby2002lt}
M.~Luby, ``Lt codes,'' in \emph{The 43rd Annual IEEE Symposium on Foundations
  of Computer Science, 2002. Proceedings.}\hskip 1em plus 0.5em minus
  0.4em\relax IEEE, 2002, pp. 271--280.

\bibitem{shokrollahi2006raptor}
A.~Shokrollahi, ``Raptor codes,'' \emph{IEEE transactions on information
  theory}, vol.~52, no.~6, pp. 2551--2567, 2006.

\bibitem{hagedorn2009rateless}
A.~Hagedorn, S.~Agarwal, D.~Starobinski, and A.~Trachtenberg, ``Rateless coding
  with feedback,'' in \emph{IEEE INFOCOM 2009}.\hskip 1em plus 0.5em minus
  0.4em\relax IEEE, 2009, pp. 1791--1799.

\bibitem{talari2013lt}
A.~Talari and N.~Rahnavard, ``Lt-af codes: Lt codes with alternating
  feedback,'' in \emph{2013 IEEE International Symposium on Information
  Theory}.\hskip 1em plus 0.5em minus 0.4em\relax IEEE, 2013, pp. 2646--2650.

\bibitem{hashemi2016fountain}
M.~Hashemi, Y.~Cassuto, and A.~Trachtenberg, ``Fountain codes with nonuniform
  selection distributions through feedback,'' \emph{IEEE Transactions on
  Information Theory}, vol.~62, no.~7, pp. 4054--4070, 2016.

\bibitem{hashemi2014near}
M.~Hashemi and A.~Trachtenberg, ``Near real-time rateless coding with a
  constrained feedback budget,'' in \emph{2014 52nd Annual Allerton Conference
  on Communication, Control, and Computing (Allerton)}.\hskip 1em plus 0.5em
  minus 0.4em\relax IEEE, 2014, pp. 529--536.

\bibitem{luby1997practical}
M.~G. Luby, M.~Mitzenmacher, M.~A. Shokrollahi, D.~A. Spielman, and V.~Stemann,
  ``Practical loss-resilient codes,'' in \emph{Proceedings of the twenty-ninth
  annual ACM symposium on Theory of computing}, 1997, pp. 150--159.

\end{thebibliography}

\end{document}